%% file: main.tex
\def\draft{1}
\documentclass{article}
\input{macros}

\title{A Fourier-Label Information-Loss\\ Barrier for Dihedral Coset Algorithms}
\author{Aparna Gupte\\MIT \and Seyoon Ragavan\\Google Quantum AI\\\& MIT \and Mark Zhandry\\Google Quantum AI \\\& Stanford University}
\date{}

\begin{document}

\maketitle

\begin{abstract}
We establish a no-go theorem for a broad class of quantum algorithms for the dihedral coset problem (DCP). We consider the Fourier-sampling and subset-sum-measurement template proposed by Regev (SIAM Journal on Computing, 2004), which is one of the main approaches to solving DCP. Suppose that, after measuring the lower $n-1$ bits of the subset sum, the algorithm discards any $\omega(\log n)$ bits from each of the Fourier labels. Then we prove that the algorithm \emph{cannot} succeed in solving DCP. This shows that any algorithm following this template must make extensive use of the Fourier labels, and thus serves as a useful guide for developing algorithms for DCP.

As a main application, we show that the recent algorithm by Simon (IACR ePrint:2026/1591, August 11 2026) does not solve DCP. We show that after the subset-sum measurement, this algorithm can be implemented (up to exponentially-small error) using only  the most-significant third of the Fourier labels, and is therefore subject to our general no-go theorem. 

To help with verifiability, we release Lean 4 code for our results, available at \href{https://github.com/sragavan99/lean-ePrint-2026-1591-refutation}{this GitHub repository.}
\end{abstract}

\section{Introduction}

A landmark work by Regev~\cite{DBLP:journals/siamcomp/Regev04} on the dihedral coset problem (DCP) showed that an efficient quantum polynomial-time (QPT) algorithm solving it would also yield a QPT algorithm for hard  lattice problems and undermine the security claims of proposed post-quantum lattice-based cryptography. 
In the same work, Regev proposed a Fourier-sampling and subset-sum-measurement template for solving DCP, and showed that it could be implemented given access to an oracle solving the subset-sum problem. The subset-sum oracle is believed to require super-polynomial time, however. Recently, Simon \cite{cryptoeprint:2026/1591} claimed a QPT algorithm for DCP via Regev's aforementioned template, claiming to remove the expensive subset-sum oracle.

In this work, we show a no-go theorem for a broad class of quantum algorithms for DCP following the template by Regev, including Simon's recent algorithm. In the following, we let $N = 2^n$ denote the DCP modulus and $d \in \Z_N$ the secret which the algorithm seeks to learn. The DCP algorithm, after Fourier sampling, is given samples of the form $(y, \frac{1}{\sqrt{2}}(\ket{0} + \omega_N^{dy} \ket{1}))$ for $y \gets \ZZ_N$.

\begin{theorem}[Informal, see~\Cref{thm:main} for precise statement]
    Consider any QPT algorithm that starts by following Regev's Fourier-sampling and subset-sum-measurement template, including measuring the lower $n-1$ bits of the subset sum. If at this point, the algorithm discards $\omega(\log n)$ bits of information about each Fourier label $y$, it cannot solve DCP.
\end{theorem}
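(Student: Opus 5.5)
We work in the setting of Regev's template. We have $m$ samples $(y_i, \tfrac{1}{\sqrt2}(\ket{0}+\omega_N^{d y_i}\ket{1}))$. The algorithm computes the subset sum $\sum_i b_i y_i \bmod N$ in superposition over $b\in\{0,1\}^m$ and measures its lower $n-1$ bits. The post-measurement state is then $\propto \sum_{b\in S}\omega_N^{d\langle b,y\rangle}\ket{b}$, where $S$ is the set of $b$ whose subset sum lies in one of two values $s$ or $s+N/2$. Consequently $\omega_N^{d\langle b,y\rangle}$ takes only two values up to a global phase, and the state is $\propto\sum_{b\in S_0}\ket{b}+ \omega_N^{dN/2}\cdot\omega\sum_{b\in S_1}\ket{b}$, i.e.\ a phase $(-1)^{d}$ times a relative phase. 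The plan is to show that when the algorithm keeps only a "coarse" version $g(y_i)$ of each label (having discarded $k=\omega(\log n)$ bits), the states arising for different values of the secret, or more precisely the pair of states that would let it distinguish $d$ from $d+N/2$ (the top bit after reduction), are statistically indistinguishable up to $\mathrm{negl}(n)$. Hence it cannot recover $d$.

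\textbf{Step 1 (reduction to a hybrid over the hidden bits).} We formalize "discards $k$ bits" as follows: the algorithm's future behaviour depends on $y$ only through a function $g(y)$ with each fiber $g^{-1}(z)$ of size at least $2^k$, and $y_i$ is uniform within its fiber conditioned on $g(y_i)$. The view of the algorithm is then the mixed state obtained by averaging the post-measurement state over $y_i$ uniform in their fibers, conditioned on the measured low bits. We write this as a joint distribution over $(g(y), \text{low bits } t, S_0, S_1)$ together with the pure state above.

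\textbf{Step 2 (the key estimate).} We show that conditioned on the visible data $(g(y), t)$, the partition of $S$ into $S_0$ (sum $=s$) and $S_1$ (sum $=s+N/2$) is close to uniformly random. This is the technical core: the top bit of $\langle b,y\rangle$ should be nearly independent of everything kept. Our first attempt is a leftover-hash/Fourier argument. For fixed $b\ne b'$, the differences $\langle b-b',y\rangle$ restricted to the hidden bits are close to uniform mod $N$ given the low $n-1$ bits, because each $y_i$ retains $k$ bits of min-entropy and a nonzero $\pm1$ combination spreads this entropy. Bounding the relevant Fourier coefficient of the character $x\mapsto(-1)^{\lfloor 2x/N\rfloor}$ against a min-entropy-$k$ source gives bias $2^{-\Omega(k)}\cdot\mathrm{poly}(n)$, and with $k=\omega(\log n)$ this is negligible. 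The main obstacle is handling the conditioning on the measured subset-sum low bits, which correlates the $y_i$. We expect to handle it by working with the unnormalized density operator, expanding $\mathbb{E}_y$ of $\ket{\psi_y}\bra{\psi_y}$ entrywise as $\mathbb{E}[\omega_N^{d\langle b-b',y\rangle}\mathbf{1}[\text{low bits match}]]$, and showing that the $d$-dependence (specifically, the dependence on the parity/top component) is killed except on the diagonal, up to negligible trace norm.

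\textbf{Step 3 (conclude).} Step~2 gives that the averaged state for secret $d$ and for secret $d'$ agreeing on all but the bits erased by the averaging (in particular $d$ versus $d+1$ in the relevant bit the single iteration is meant to reveal) are $\mathrm{negl}(n)$-close in trace distance. A hybrid over the polynomially many samples and iterations then shows that the entire algorithm's output distribution is negligibly close between such secrets. Since at least one of them must be output incorrectly with probability about $1/2$, the algorithm cannot solve DCP.
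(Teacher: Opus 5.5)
\paragraph{Verdict: there is a genuine gap, located in your Step~2.} The problem is not missing detail: the estimate you aim for is false. You want that, given the kept data $(g(\yv),z)$, the split of $S$ into $S_0,S_1$ (the top bit of $\langle \yv,\bv\rangle$) is nearly uniform. This fails for the most natural digests, including the one relevant to Simon's algorithm. Let $g$ keep the top $n-k$ bits, so $y_i=2^k q_i+u_i$ with $u_i\in[0,2^k)$ hidden. Take any $\bv$ of Hamming weight $w$ with $w2^k<N/2$; this covers every $\bv$ when $k=\log^2 n$ and $m=\poly(n)$. Then $\langle\yv,\bv\rangle \bmod N$ lies in a cyclic interval of fewer than $N/2$ points determined by $\vecq$. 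So once $\bv\in S$, its top bit $h(\bv)$ is a deterministic function of $(\vecq,z,\bv)$.

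This has two consequences. Min-entropy $k$ gives no bias bound against $x\mapsto(-1)^{\lfloor 2x/N\rfloor}$: a source supported in $[0,2^k)$ has bias $1$. And in your entrywise expansion, each off-diagonal entry conditioned on the kept data is a sign $(-1)^{d(h(\bv)-h(\bv'))}$ fixed by that data, times a $d$-independent weight. Nothing cancels entry by entry. Moreover, even negligible entries would not bound the trace norm of a $2^{m+1}$-dimensional operator. Your Step~3 hybrid is unspecified and supplies no mechanism for passing from entries to $\norm{\cdot}_1$. Two smaller slips: the relevant distinction is $\mathsf{LSB}(d)$, since $d$ and $d+N/2$ yield identical states; and the template's post-measurement state retains the register $h$, which you drop.

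\paragraph{The missing idea.} Decoherence comes from the \emph{support} of the superposition varying with the hidden bits, i.e.\ which $\bv$ satisfy $f^1_\yv(\bv)=z$. It does not come from randomizing the phases. The paper fixes $\yv_{-i}$ and dephases the coordinates $b_i$ one at a time. In the block labelled $(H_i(y_i)=a,z)$, the coherence removed is $\ket{\phi_0}\bra{\phi_1}+\ket{\phi_1}\bra{\phi_0}$. Here $\ket{\phi_0}$ (the $b_i=0$ part) does not depend on $y_i$. By contrast, $\ket{\phi_1}$ is a sum over $y_i\in H_i^{-1}(a)$ of pieces with pairwise disjoint supports. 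The reason is that for $b_i=1$, the basis label $(\bv,h)$ together with $z$ determines $\langle\yv,\bv\rangle$ and hence $y_i$; this is exactly where the $h$ register is used. So squared norms, not amplitudes, add over the fiber. Cauchy--Schwarz then gives $\norm{\rho_d-\rho_{d,i}}_1\le\sqrt{K/N}$ with no entropy or Fourier estimate at all. Data processing chains these bounds over $i\in[m]$, and the fully dephased state is independent of $d$, so the advantage is at most $m\sqrt{K/N}$.
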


An (un)fortunate consequence of our theorem, stated in~\Cref{cor:simon-does-not-work}, is that Simon's new algorithm cannot actually efficiently solve DCP, and therefore does not efficiently solve hard lattice problems nor undermine the security of lattice-based cryptosystems. The reason for this is that Simon's algorithm can be simulated (up to exponentially-small error) by discarding the $\approx 2n/3$ least-significant bits of each Fourier label $y$.

\paragraph{Outlook and caveats.}
Eliding some technical details, the algorithm by~\cite{DBLP:journals/siamcomp/Regev04} uses a subset-sum oracle to coherently erase the register of control bits, given the corresponding subset sum as input.
Implementing such a subset-sum oracle (inefficiently) requires the algorithm to use the $y$'s in their entirety.
The qualitative takeaway from our theorem is that this is an inherent requirement not just for Regev's algorithm but for any algorithm that follows Regev's template; even discarding as few as $\omega(\log n)$ bits of each $y_i$ in a predetermined way is enough to information-theoretically hide the least-significant bit of $d$. Given the generality of our no-go result, we are hopeful that it can extend beyond refuting the algorithm by~\cite{cryptoeprint:2026/1591} and actually help guide future efforts to develop quantum algorithms for DCP or related problems such as lattices.

We anticipate that there might be variants of the algorithm by~\cite{cryptoeprint:2026/1591} that circumvent the formal scope of Theorem~\ref{thm:main} (we describe some simple possibilities below in Remark~\ref{remark:superficial}), but in superficial ways that do not salvage the algorithm's correctness.
There is the possibility of a genuine and plausible repair path that circumvents Theorem~\ref{thm:main} in a meaningful way, but we are not aware of such an approach.

\paragraph{Organization.} We first recall the basic framework of Regev~\cite{DBLP:journals/siamcomp/Regev04} in \Cref{sec:algo-recap}, give intuition for why any algorithm in the framework that discards too many bits of the Fourier labels must fail in \Cref{sec:intuition}. We state the formal impossibility, our main result, in \Cref{main-result}, and then in \Cref{sec:simonspec} we apply the result to Simon's~\cite{cryptoeprint:2026/1591} algorithm to show that it cannot succeed. A brief background is given in \Cref{sec:prelim}. The proofs are in \Cref{sec:proof-main}.

\paragraph{Acknowledgements.}
The authors thank Noah Shutty, Daniel R. Simon, Vinod Vaikuntanathan, and Umesh Vazirani for helpful discussions.
AG's and SR's access to ChatGPT Pro was supported by the UK AISI alignment project.
SR's work on this note was supported in part by NSF CNS-2534400.

\paragraph{Statement on AI use.}
A preliminary version of Theorem~\ref{thm:main} that was more specific to the algorithm of~\cite{cryptoeprint:2026/1591} was obtained by the authors.
The authors then found a generalized no-go in conversation with GPT-5.6 Sol Pro and Ultra, and then used the same models to produce accompanying Lean 4 formalization.
The authors are fully accountable for the correctness of this note and the accompanying Lean formalization.

\section{Overview and Main Result}

\subsection{Recap of Regev's Template}
\label{sec:algo-recap}
\paragraph{Setup.}
Let $n$ be the security parameter, $N = 2^n$ the DCP modulus, $d \in \Z_N$ the DCP secret, and $m=kn^{c+1}$ the number of DCP Fourier samples\footnote{Simon's manuscript~\cite{cryptoeprint:2026/1591} instead uses $Q$ to denote the number of samples.} used by the algorithm as input for constants $k,c$.
One DCP Fourier sample consists of a uniformly random $y \gets \Z_N$ together with the state $\frac{1}{\sqrt{2}}(\ket{0} + \omega_N^{dy}\ket{1})$.
Here, $\omega_N = \exp(2\pi i/N)$.
The goal of the template put forth by~\cite{DBLP:journals/siamcomp/Regev04} is to extract the least-significant bit of $d$; by repeating this procedure in a standard way we could then recover all of $d$.

\paragraph{Steps 1 and 2 from the template of~\cite{DBLP:journals/siamcomp/Regev04}.}
In Step 1, the algorithm concatenates the $m$ Fourier samples to obtain:
\begin{center}\begin{tabular}{cc}
Classical info&Quantum state\\\\
$\yv$&
$\displaystyle|\psi^1_{\yv,d}\rangle\propto\sum_{\bv\in\{0,1\}^m}\omega_N^{d\langle \yv, \bv \rangle}|\bv\rangle$\end{tabular}\end{center}
The vector $\yv$ is sampled uniformly from $\Z_N^m$.

Now for Step 2, let $f_\yv(\bv)=\langle \yv, \bv \rangle\bmod N$. We will let $f^0_\yv(\bv)$ denote the highest-order bit of $f_\yv(\bv)$, and $f^1_\yv(\bv)$ denote the rest of the bits. The algorithm applies $f_\yv$ in superposition, and measures $f^1_\yv(\bv)$ to obtain the value $z$. Since $z$ is now classical, we can move this to the classical part of our information\footnote{Simon's manuscript~\cite{cryptoeprint:2026/1591} uses $z'$ instead of $z$ to denote the measurement result.}. By writing $f_\yv(\bv)=2^{n-1}f^0_\yv(\bv)+f^1_\yv(\bv)$, the phase term becomes
$$\omega_N^{df_\yv(\bv)}=\omega_N^{d(2^{n-1}f^0_{\yv}(\bv)+z)} = \omega_N^{dz} \cdot (-1)^{df_\yv^0(\bv)}.$$
The $\omega_N^{dz}$ part can be ignored since it is just an overall phase. Thus, letting $h = f_\yv^0(\bv)$, we can write the information obtained as: 

\begin{center}\begin{tabular}{cc}
Classical info&Quantum state\\\\
$\yv,z$&
$\displaystyle|\psi^2_{\yv,z,d}\rangle\propto \sum_{h \in \{0, 1\}} \sum_{\substack{\bv\in\{0,1\}^m \\ \langle \yv, \bv \rangle = z+hN/2}}(-1)^{dh}|\bv,h\rangle$\end{tabular}\end{center}
At this point, the reduction by~\cite{DBLP:journals/siamcomp/Regev04} assumes access to a subset-sum oracle to uncompute the $\bv$ register, leaving $\sum_h (-1)^{dh}|h\rangle$; measuring in the Hadamard basis then reveals the least-significant bit of $d$. (The algorithm of~\cite{cryptoeprint:2026/1591} aims to get a direct algorithm to recover the least-significant bit of $d$, meaning it cannot assume a subset-sum oracle. As such, it makes a necessary deviation which we will discuss in~\Cref{sec:simonspec}.)

Note that at this point our state depends only on the least-significant bit of $d$.

\subsection{Intuition for the No-Go}
\label{sec:intuition}

Before we formally state our no-go results, we provide some intuition for them.

To start, consider the state after Step 2. Suppose we measured the $\vecb$ register, collapsing it into a classical state. Since the secret $d$ lives in the phase of this state, collapsing the $\vecb$ register would lose all information about $d$ to the global phase. \textbf{Maintaining coherence across different values of $\bv$ is therefore crucial to recovering $d$.} In fact, we want to make this coherence computationally accessible---this is what Regev's algorithm~\cite{DBLP:journals/siamcomp/Regev04} does, assuming a subset-sum oracle. 

Our observation is that an algorithm must retain almost all of the information about the Fourier labels $\yv$ in order to maintain this coherence.

\paragraph{A Toy Example.} To build intuition, consider for a moment an algorithm that stops making use of $\yv, z$ after Step 2.
(We emphasize that this is not a requirement of our no-go, nor is it what \cite{cryptoeprint:2026/1591} is doing; we only start with this setting to build intuition.) If this were the case, then the algorithm's success would not be affected by instead discarding the $\vecy, z$ classical registers in their entirety. We can then compute the density matrix for the mixed state obtained by averaging over the discarded bits:
\begin{align*}
    \frac{1}{2^m N^m} \sum_{\substack{\vecy \in \ZZ_N^m \\ z \in \ZZ_{N/2} \\ h, h' \in \{0,1\}}} \sum_{\substack{\vecb, \vecb' \in \{0,1\}^m \\ f_\vecy(\vecb) = z + h N/2 \\ f_\vecy(\vecb') = z+h'N/2}} (-1)^{d (h - h')} \ketbra{ \vecb, h}{\vecb', h'}.
\end{align*}
In this hypothetical scenario, the algorithm must recover $d$ from this mixed state.

We claim that this state has negligible trace distance from the state obtained by simply measuring the $\vecb$ register. To see this, observe that outside the point $\bv=0$, the function $f_{\vecy}^1$ is a pairwise independent function if $\vecy$ is sampled randomly. The only exception to pairwise independence is the fact that $f_{\vecy}^1(0) = 0$. For each $(\vecb,\vecb')$ pair (excluding 0), over the choice of $\vecy$, the probability that $(\vecb,\vecb')$ remain coherent is exponentially small. Since $\vecy$ has been discarded and we average over the choice of $\vecy$, we find that the resulting density matrix has all off-diagonal terms exponentially suppressed. A direct calculation shows that the resulting density matrix is exponentially-close to the diagonal matrix--that is, the diagonal mixed state where the $\vecb$'s are totally decohered. We additionally need to handle the terms where $\vecb=0$, but since these have an exponentially-small mass, it does not significantly change the outcome. Thus, all information about $d$ is lost, except with exponentially-small probability.

A less precise version of the above intuition is the following: consider a purification of the algorithm that holds $\yv$ in superposition.
Then the $\yv$ register is very strongly entangled with the $\bv$ register.
If these registers were \emph{maximally} entangled, then discarding the $\yv$ register has the effect of completely decohering the $\bv$ register.
So if the registers were instead just strongly entangled, then discarding the $\yv$ register should still have a strong decoherence effect on the $\bv$ register.

\medskip

\paragraph{Applying to more general algorithms.} A more refined version of the argument above shows that even discarding a small amount of information about $\vecy$ after Step 2 is not enough to keep the $\vecb$ register usefully coherent. We formalize this using ``digest functions'' $H_i : \ZZ_N \to [K]$ for $i \in [m]$ that can be used to retain some information about $\vecy$ and discard the rest. Our main theorem shows that, if the digest functions have small enough outputs, it remains the case that $d$ is statistically hidden, even given the outputs of the digest function.

\subsection{Main Result}
\label{main-result}
We show in considerable generality that an algorithm that discards enough information about each coordinate of $\yv$ in this way cannot possibly work:
\begin{theorem}\label{thm:main}
    Fix any ``digest functions'' $H_1, \ldots, H_m: \Z_N \to [K]$. (These digest functions are fixed independently of the DCP secret and Fourier samples.)
    Then \textbf{any} algorithm that uses only the following information:
    \begin{center}\begin{tabular}{cc}
    Classical info&Quantum state\\\\
    $(H_1(y_1), \cdots, H_m(y_m)),z$&
    $\displaystyle|\psi^2_{\yv,z,d}\rangle\propto \sum_{h \in \{0, 1\}} \sum_{\substack{\bv\in\{0,1\}^m \\ \langle \yv, \bv \rangle = z+hN/2}}(-1)^{dh}|\bv,h\rangle$,\end{tabular}\end{center}
    where $\vecy, z, |\psi^2_{\vecy, z, d}\rangle$ are obtained by following Steps 1 and 2 described in \Cref{sec:algo-recap}, distinguishes between the cases where the least-significant bit of $d$ is 0 vs. 1 with advantage at most $m \sqrt{K/N}$.
    (See Definition~\ref{def:distadv} for a formal definition of distinguishing advantage.)
\end{theorem}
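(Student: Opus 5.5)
The plan is to bound $\tfrac12\|\rho_0-\rho_1\|_1$ directly. Here $\rho_d$ denotes the joint classical--quantum state
\[
\rho_d=\sum_{g,z}|g,z\rangle\langle g,z|\otimes \Pr[g,z]\,\mathbb{E}_{\yv\mid g,z}\bigl[|\psi^2_{\yv,z,d}\rangle\langle\psi^2_{\yv,z,d}|\bigr],
\]
with $g=(H_1(y_1),\dots,H_m(y_m))$. By the remark after Step~2, the state depends on $d$ only through its least-significant bit, so it suffices to compare $d=0$ with $d=1$ (or any two secrets of differing parity).

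\textbf{Step 1: isolate the $d$-dependent part.} Expand $|\psi^2\rangle\langle\psi^2|$ in the basis $|\bv,h\rangle\langle\bv',h'|$. The sign $(-1)^{d(h-h')}$ appears only on blocks with $h\ne h'$, so $\rho_0-\rho_1=2\,\mathrm{Off}$, where $\mathrm{Off}$ is the $h\ne h'$ cross-block of $\rho_0$. It therefore suffices to show $\|\mathrm{Off}\|_1\le m\sqrt{K/N}$. Every pair $(\bv,\bv')$ in this block has $\bv\ne\bv'$, since $\langle\yv,\bv\rangle$ and $\langle\yv,\bv'\rangle$ differ by $N/2$. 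The coefficient on such a pair, before normalization and averaging, is the indicator that $\langle\yv,\bv\rangle=z$ and $\langle\yv,\bv'\rangle=z+N/2$ (or vice versa).

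\textbf{Step 2: a collision bound per coordinate.} For $\bv\ne\bv'$, I would fix a coordinate $i$ with $b_i\ne b'_i$. Condition on $g$ and on all $y_j$ with $j\ne i$. Then $y_i$ is uniform on the preimage $P_i=H_i^{-1}(g_i)$, and the event $\langle\yv,\bv-\bv'\rangle\equiv N/2 \pmod N$ pins down $y_i$ to a single residue. Its conditional probability is therefore at most $1/|P_i|$. Averaging over $g_i$ gives
\[
\mathbb{E}_{g_i}\bigl[1/|P_i|\bigr]=\sum_{g_i:\,P_i\neq\emptyset}\frac{|P_i|}{N}\cdot\frac1{|P_i|}\le \frac KN .
\]
This is the pairwise-independence intuition of \Cref{sec:intuition}, made quantitative in the presence of the digest.

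\textbf{Step 3: from collision probabilities to trace norm (main obstacle).} Entrywise bounds do not control $\|\cdot\|_1$ directly, so I would write the state as a mixture over $\yv$ and handle the coherence via a purification or gentle-measurement argument.

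First I would do a hybrid over coordinates. Partition the off-diagonal pairs according to the \emph{first} coordinate $i$ at which $\bv$ and $\bv'$ differ. This writes $\mathrm{Off}=\sum_{i=1}^m \mathrm{Off}_i$, and it suffices to show $\|\mathrm{Off}_i\|_1\le\sqrt{K/N}$ for each $i$.

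For fixed $i$, I would view $\mathrm{Off}_i$ as the cross term $\mathbb{E}_\yv\bigl[\Pi^{(i)}_{0}|\psi\rangle\langle\psi|\Pi^{(i)}_{1}\bigr]$ between two orthogonal pieces of a normalized state, suitably restricted. I would bound it by Cauchy--Schwarz in the form $\|\mathbb{E}[|u\rangle\langle v|]\|_1\le\sqrt{\mathbb{E}\|u\|^2\,\mathbb{E}\|v\|^2}$. To make one factor small, I would use the purification in which $y_i$ is held coherently within $P_i$. The weight of the branch in which $b_i$ flips while the other coordinates and $z$ stay consistent is governed by the probability from Step~2. This should give $\mathbb{E}\|v\|^2\lesssim K/N$ and $\mathbb{E}\|u\|^2\le1$, hence $\sqrt{K/N}$ per coordinate.

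Summing over $i$ yields the advantage bound $m\sqrt{K/N}$, and the distinguishing advantage is at most this trace distance by Definition~\ref{def:distadv}. The delicate point, which I expect to be the real work, is arranging the decomposition so that the Cauchy--Schwarz step yields $\sqrt{K/N}$ rather than a bound that loses a factor $2^m$ from counting pairs $(\bv,\bv')$. If the purification route proves awkward, my fallback is to compute the fidelity between $\rho_0$ and $\rho_1$ conditioned on $g$ coordinate by coordinate and then apply Fuchs--van de Graaf.
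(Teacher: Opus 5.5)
\emph{Verdict: genuine gap.} The outer structure is right, but the per-coordinate bound $\|\mathrm{Off}_i\|_1\le\sqrt{K/N}$ is never actually established, and it is the whole content of the proof.

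Your outer structure is essentially the paper's. Splitting the $h\neq h'$ cross terms by the first coordinate where $\bv$ and $\bv'$ differ is the $h\neq h'$ part of the telescoping $\rho_d-\rho_{d,[m]}=\sum_i(\rho_{d,[i-1]}-\rho_{d,[i]})$ over successive dephasings. Your $\mathbb{E}_{g_i}[1/|P_i|]\le K/N$ amounts to the paper's last step $\frac1N\sum_a\sqrt{|H_i^{-1}(a)|}\le\sqrt{K/N}$. The problem is the per-coordinate bound itself, which you leave at ``should give''. The mechanism you sketch does not deliver it. If $u,v$ are the $b_i=0$ and $b_i=1$ pieces of the post-measurement state for each fixed $(\yv,z)$, then $\mathbb{E}\|u\|^2=\mathbb{E}\|v\|^2=1/2$. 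So your inequality $\|\mathbb{E}[|u\rangle\langle v|]\|_1\le\sqrt{\mathbb{E}\|u\|^2\,\mathbb{E}\|v\|^2}$ yields only a constant. Your Step~2 collision probability is an entrywise fact about pairs $(\bv,\bv')$, and as you note, that alone does not control the trace norm. The phrases ``hold $y_i$ coherently'' and ``$\mathbb{E}\|v\|^2\lesssim K/N$'' name the outcome you want, not a reason for it.

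The missing idea is to sum over $y_i$ \emph{inside} each classical block before taking any norm.
\begin{itemize}
\item Fix $\yv_{-i}$ and a block $(\matH(\yv_{-i}),a,z)$. The $b_i=0$ piece $|\phi_0\rangle=\sum_{\bv:\,b_i=0,\,f^1_{\yv}(\bv)=z}(-1)^{df^0_{\yv}(\bv)}|\bv,f^0_{\yv}(\bv)\rangle$ does not depend on $y_i$.
\item Hence the sum over $y_i\in H_i^{-1}(a)$ of the $b_i\neq b'_i$ cross terms collapses to a single rank-one operator $|\phi_0\rangle\langle\phi_1(a)|$ plus its adjoint. Here $|\phi_1(a)\rangle$ is the coherent sum over $y_i\in H_i^{-1}(a)$ of the $b_i=1$ pieces.
\item When $b_i=1$, the basis label $(\bv,h)$ together with $z$ and $\yv_{-i}$ determines $y_i$. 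This is your ``pins down $y_i$'' observation, but applied to a single $\bv$ rather than to a pair.
\item So the summands of $|\phi_1(a)\rangle$ have disjoint supports, and $\sum_z\|\phi_1(a)\|_2^2=|H_i^{-1}(a)|\,2^{m-1}$ grows linearly rather than quadratically in the preimage size. This is exactly the $1/|P_i|$ saving you wanted.
\item Combine $\sum_z\|\phi_0\|_2^2=2^{m-1}$, the rank-one identity $\bigl\|\,|\phi\rangle\langle\psi|\,\bigr\|_1=\|\phi\|_2\|\psi\|_2$, and Cauchy--Schwarz over $z$ and then over $a$. This gives $\frac1N\sum_a\sqrt{|H_i^{-1}(a)|}\le\sqrt{K/N}$.
\end{itemize}

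There is one further point. Your $\mathrm{Off}_i$ carries the extra constraints $b_j=b'_j$ for $j<i$ and $h\neq h'$, so it is not literally of the form $\mathbb{E}[\Pi_0|\psi\rangle\langle\psi|\Pi_1]$. You must first drop these constraints using trace-norm contractivity. Dephasing $b_1,\dots,b_{i-1}$ is a channel, and the $h\neq h'$ part of $X$ is $\frac12(X-ZXZ)$ with $Z$ the Pauli on the $h$ qubit. This is what the data-processing step does in the paper.
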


We prove the above in~\Cref{sec:proof-main}.

\subsection{Specializing to Simon's Algorithm}\label{sec:simonspec}

To see why~\Cref{thm:main} applies to the algorithm by~\cite{cryptoeprint:2026/1591}, we need to recall its next step after~\Cref{sec:algo-recap}. 
The reason is that this step makes use of the Fourier labels $\vecy$.

\paragraph{Step 3 of~\cite{cryptoeprint:2026/1591}.} This is where \cite{cryptoeprint:2026/1591} deviates from \cite{DBLP:journals/siamcomp/Regev04}.
Here, we divide $\bv$ and $\yv$ into $L = m/(c\log n)$ groups of $c\log n$ terms $\bv=(\bv^{(1)},\bv^{(2)},\cdots, \bv^{(L)})$, $\yv=(\yv^{(1)},\yv^{(2)},\cdots, \yv^{(L)})$ with $\bv^{(i)}\in\{0,1\}^{c\log n}$ and $\yv^{(i)}\in\Z_N^{c\log n}$. Let $g_{\yv^{(i)}}(\bv^{(i)})$ be the $\log n$ most significant bits of $\yv^{(i)}\cdot \bv^{(i)}\bmod N$.
In this step, we apply $g_{\yv^{(i)}}$ in superposition to the $i$th group, obtaining the information:

\begin{center}\begin{tabular}{cc}
Classical info&Quantum state\\\\
$\yv=(\yv^{(1)},\cdots,\yv^{(L)}),z$&
$\displaystyle|\psi^3_{\yv,z,d}\rangle\propto \sum_{h \in \{0, 1\}} \sum_{\substack{\bv = (\bv^{(1)}, \ldots, \bv^{(L)}) \in\{0,1\}^m \\ \langle \yv, \bv \rangle = z+hN/2}}(-1)^{dh}|\bv,h, g_{\yv^{(1)}}(\bv^{(1)}), \cdots, g_{\yv^{(L)}}(\bv^{(L)})\rangle$\end{tabular}\end{center}

\paragraph{Steps 4-7 of~\cite{cryptoeprint:2026/1591}.}
At this point, we will not need to recall what the rest of the algorithm does.
The key point is that from this point on, the algorithm of~\cite{cryptoeprint:2026/1591} works exclusively with the state $\ket{\psi_{\yv, z, d}^3}$ and does not make further use of the classical information $\yv, z$, so we can consider this information as being discarded.

Our observation is that in Step 3 of the algorithm by~\cite{cryptoeprint:2026/1591}, $\yv$ is only used to compute the functions $g_{\vecy^{(i)}}(\vecb^{(i)})$.
Since the $\vecb^{(i)}$ are small, changing the least-significant bits of $\vecy^{(i)}$ only slightly changes the inner-product, which means that the output of $g$ is unaffected (with high probability). As such, we can actually compute $g$, and hence the state in Step 3, only given, say, the top third most significant bits of $\yv$, discarding the least significant two-thirds as well as $z$.\footnote{As noted in~\Cref{remark:broad}, even if the algorithm additionally depended on $z$ it would not help.} The top-third of the bits of $\yv$ is a sufficiently small digest of $\yv$ for our theorem to apply. Due to the chance that low-order bits cause a large cascade of carries, our implementation using the top-third of the bits of $\yv$ does incur a tiny statistical error of $\poly(n)2^{-n/3}$, but this does not change the final result that all information about $d$ is hidden.
As such, {\bf Simon's algorithm and any possible algorithm that does not make use of the discarded bits of $\yv$ simply cannot recover any information about $d$.} We prove the below formally in~\Cref{sec:proof-main}.

\begin{corollary}\label{cor:simon-does-not-work}
    One run (i.e., a single run with $m$ samples that outputs one of $\{\mathsf{restart}, 0, 1\}$) of the algorithm by~\cite{cryptoeprint:2026/1591} distinguishes between the cases where the least-significant bit of $d$ is 0 vs. 1 with advantage at most $\poly(n)2^{-n/3}$.
    (See Definition~\ref{def:distadv} for a formal definition of distinguishing advantage.)
\end{corollary}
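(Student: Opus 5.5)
We will prove the corollary by reducing it to Theorem~\ref{thm:main}. Concretely, we construct a simulator whose only inputs are the top third of the bits of each $y_i$ together with $|\psi^2_{\yv,z,d}\rangle$, and which outputs a state within trace distance $\poly(n)2^{-n/3}$ of $|\psi^3_{\yv,z,d}\rangle$. Steps 4--7 of~\cite{cryptoeprint:2026/1591} act only on $|\psi^3\rangle$ and never use $\yv$ or $z$ again. So the whole run decomposes as ``digest-only algorithm'' plus ``small perturbation''. The advantage is then at most the Theorem~\ref{thm:main} bound plus twice the trace distance, using the triangle inequality for distinguishing advantage from Definition~\ref{def:distadv}.

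\textbf{Step 1 (choose the digest).} Set $H_i(y)=\lfloor y/2^{n-n/3}\rfloor$, which is the top $n/3$ bits of $y$, so $K=2^{n/3}$. Theorem~\ref{thm:main} then bounds the advantage of the simulated algorithm by $m\sqrt{K/N}=m\,2^{-n/3}=\poly(n)2^{-n/3}$, since $m=kn^{c+1}$.

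\textbf{Step 2 (the simulated step 3).} Let $\tilde y_i=2^{2n/3}H_i(y_i)$. The simulator applies, coherently on each group, $\ket{\bv^{(i)}}\mapsto\ket{\bv^{(i)}}\ket{\tilde g_i(\bv^{(i)})}$. Here $\tilde g_i$ is the top $\log n$ bits of $\tilde\yv^{(i)}\cdot\bv^{(i)} \bmod N$.

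Write $\yv^{(i)}\cdot\bv^{(i)}=\tilde\yv^{(i)}\cdot\bv^{(i)}+e$. Because $\bv^{(i)}$ has only $c\log n$ entries in $\{0,1\}$, we have $0\le e< c\log n\cdot 2^{2n/3}$. Hence $\tilde g_i(\bv^{(i)})\neq g_{\yv^{(i)}}(\bv^{(i)})$ only when $\tilde\yv^{(i)}\cdot\bv^{(i)} \bmod N$ lies within $c\log n\cdot 2^{2n/3}$ below a multiple of $N/n$, the boundary of a top-$\log n$-bit bucket.

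\textbf{Step 3 (bounding the error; the main obstacle).} The difficulty is that the bad event depends on $\bv$, which sits in superposition, and not only on the classical $\yv$. A naive union bound over all $2^{c\log n}=n^c$ values of $\bv^{(i)}$ per group does handle this. For any fixed nonzero $\bv^{(i)}$, the sum $\tilde\yv^{(i)}\cdot\bv^{(i)}$ is uniform over multiples of $2^{2n/3}$ as $\yv$ varies. It therefore lands in a bad window with probability at most $n\cdot c\log n\cdot 2^{2n/3}/N=\poly(n)2^{-n/3}$, and for $\bv^{(i)}=0$ there is no error at all.

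Union bounding over all groups and all $\bv^{(i)}$ gives the following. Except with probability $\poly(n)2^{-n/3}$ over $\yv$, we have $\tilde g_i=g_{\yv^{(i)}}$ on every input of every group. In that case the simulated unitary acts identically to the real one on $|\psi^2\rangle$.

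One subtlety remains: conditioning on $z$ biases $\yv$. We avoid this by bounding the expected trace distance averaged jointly over $(\yv,z)$. The averaged post-step-3 mixed states differ by at most $\Pr_{\yv}[\text{bad}]$, and the marginal of $\yv$ before measuring $z$ is uniform. Adding this error to the bound from Theorem~\ref{thm:main} yields the corollary.
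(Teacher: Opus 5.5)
Your proposal is correct and follows essentially the same route as the paper. You take the digest to be the top third of the bits of each $y_j$ and recompute the Step-3 summaries from it. You then union-bound the carry/boundary-crossing bad event over the $L$ blocks and the $n^c$ strings $\bv^{(i)}$ per block, and bound the averaged trace distance by $\Pr_{\yv}[\text{bad}]$. Finally, you add this reconstruction error (twice, once per value of the bit) to the $m\sqrt{K/N}$ bound from Theorem~\ref{thm:main}.

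The only differences are cosmetic. You argue uniformity of the truncated sum over multiples of $2^{2n/3}$, where the paper uses uniformity of the full sum $\langle\yv^{(i)},\bv^{(i)}\rangle \bmod N$. You also ignore the rounding in $2n/3$, which the paper handles by setting $\ell=\lfloor 2n/3\rfloor$.
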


\paragraph{Remarks.}

\begin{remark}[Very superficial circumventions that do not salvage the algorithm]\label{remark:superficial}
    Theorem~\ref{thm:main} admits some very superficial circumventions by modifying the algorithm to retain a register containing $\yv$ (but not actually use this register in any way), or to postpone the measurement of $\yv$ until the end of the algorithm.
    In fact, the algorithm by Simon~\cite{cryptoeprint:2026/1591} appears to do this at the surface since it only measures $\yv$ at the end of Step 4.

    The specific tweaks described above do not protect the algorithm from our no-go results, because we can switch to an alternative algorithm that (a) is covered by Theorem~\ref{thm:main}; and (b) has exactly the same output distribution (in terms of the single-bit guess for $\mathsf{LSB}(d)$) as the original algorithm.
    Such an alternative algorithm can be obtained by discarding registers after the algorithm operates on them for the last time, and commuting the measurement of $\yv$ to the end of Step 1 (using the fact that we are only commuting this measurement past operations that control on the standard-basis contents of the $\yv$ register).
\end{remark}

\begin{remark}[Post-selection technicalities]
    Strictly speaking, the algorithm by~\cite{cryptoeprint:2026/1591} involves several runs, for both of the following reasons:
    \begin{itemize}
        \item To retry if one of the measurements in the later steps of the algorithm by~\cite{cryptoeprint:2026/1591} does not succeed. In this case, the run would output $\mathsf{restart}$; or
        \item To amplify the purported reliability of the algorithm's guess in $\{0, 1\}$ for the least-significant bit of the secret $d$.
    \end{itemize}
    The algorithm terminates when sufficiently-many runs output a bit instead of $\mathsf{restart}$.
    Without analyzing the remaining steps of the algorithm by~\cite{cryptoeprint:2026/1591}, we need to be careful to show that the additional runs do not circumvent Corollary~\ref{cor:simon-does-not-work}, which only applies to a single run.

    First, notice that if each run outputs $\mathsf{restart}$ with probability 1, then the expected run-time is infinite. Therefore, we assume the probability of $\mathsf{restart}$ is strictly less than 1. In this case, the algorithm will terminate with probability 1 and output a bit. 

    We next derive from Corollary~\ref{cor:simon-does-not-work} that there are only two possibilities: either the algorithm is simply incorrect, or the algorithm fails to run in even sub-exponential time. 
    This can be shown in a standard way as follows: fix two secrets $d_0,d_1$ with opposite least-significant bits, and
    consider a formally truncated version of the algorithm that is allowed at
    most $2^{\lfloor 2n/9\rfloor}$ runs, returning an arbitrary fixed bit if the algorithm has not decided its single-bit output by this point.
    These $2^{\lfloor 2n/9 \rfloor}$ runs should encompass both retries and amplification. We now have a win-win argument:
    \begin{itemize}
        \item By Corollary~\ref{cor:simon-does-not-work} and a standard
        hybrid argument, the formally truncated algorithm distinguishes the
        $\mathsf{LSB}(d)=0$ and $\mathsf{LSB}(d)=1$ cases with advantage at
        most
        \[
            2^{2n/9} \cdot\poly(n)2^{-n/3}
            \leq \poly(n)2^{-n/9}.
        \]

        \item Suppose that, for at least one $i\in\{0,1\}$, the eventual output
        distribution on secret $d_i$ has statistical distance at least
        $2^{-n/9}$ from its formally truncated output distribution. These
        distributions can differ only if the algorithm has not finished after $2^{\lfloor 2n/9 \rfloor}$ runs.
        Consequently, that event has probability at least $2^{-n/9}$, and
        the expected number of runs on secret $d_i$ is at least
        \[
            2^{-n/9} \cdot 2^{\lfloor 2n/9 \rfloor}
            \geq 2^{n/9-1}.
        \]
        Otherwise, for both secrets the eventual distribution is within
        $2^{-n/9}$ of its truncated distribution. The triangle inequality
        then shows that the two eventual output distributions have
        statistical distance at most
        \[
            \poly(n)2^{-n/9}+2\cdot2^{-n/9}
            =\poly(n)2^{-n/9}.
        \]
    \end{itemize}
    Thus, either the algorithm has expected runtime $\Omega(2^{n/9})$ for at least one of the two secrets, or its eventual distinguishing advantage between the two secrets is at most $\poly(n)2^{-n/9}$. In particular, it cannot simultaneously have subexponential expected runtime and inverse-subexponential correctness advantage.

\end{remark}

\begin{remark}\label{remark:broad}
    We remark that our no-go captures a broader class of algorithms than \cite{cryptoeprint:2026/1591} in two ways:
    \begin{enumerate}
        \item First, we rule out algorithms that retain \emph{arbitrary} small digests, not just the first few MSBs of blockwise subset sums $\vecg_{\vecy^{(i)}}$ that are used in \cite{cryptoeprint:2026/1591}.
        \item Second, we do not require that the algorithm discard $z = f^1_{\vecy}(\vecb)$ value; our no-go holds even if further operations that are classically controlled on this register are performed, as long as the $z$ register is collapsed. Simon's algorithm ignores the $z$ register after Step 2, so it effectively measures \emph{and discards} this register.
    \end{enumerate}
\end{remark}

\subsection{Related Work}

DCP arises naturally when designing algorothms for solving the dihedral hidden subgroup problem (DHSP). Indeed, almost all algorithms for DHSP work by generating DCP samples and then solving the DCP problem by beginning with Fourier sampling. Here, we give an overview of what we know about algorithms and no-gos for hidden subgroup problems, including but not limited to dihedral instances.

\paragraph{Subexponential-time algorithms for DCP.}
Kuperberg~\cite{kuperberg} gives a $2^{O(\sqrt{\log N})}$-time algorithm for DCP and hence DHSP, using subexponential space.
Regev~\cite{regevpolyspace} reduces the space requirement to $\poly(\log N)$ at the cost of a $2^{O(\sqrt{\log N\log\log N})}$ running time.
Kuperberg's later algorithm~\cite{kuperberg2013} achieves $2^{O(\sqrt{\log N})}$ time with $O(\log N)$ quantum space and subexponential classical space.
These algorithms process retained quantum registers using the classical Fourier labels to guide their operations, and use superpolynomially many samples.

Bonnetain and Naya-Plasencia~\cite{bonnetain} consider a variant where the secret lives in $\Z_{2^r}^n$, and show an algorithm running in time $n^{O(r)}$, which for $2^r\approx n$ is quasi-polynomial in the bit-length of the secret. This algorithm was later adapted by Bai et al.~\cite{baiedcp} to a variant known as the extrapolated dihedral coset problem in certain regimes. 
These algorithms do not yield any useful algorithms for lattices, however. This is roughly because these algorithms require a number of DCP samples proportional to their running time, and moreover the samples need to be exact. On the other hand, Regev's reduction~\cite{DBLP:journals/siamcomp/Regev04} can only generate samples that are noisy, and once the number of samples is super-polynomial, the noise accumulation is too much for the known DCP algorithms.

\paragraph{Sample-efficient algorithms for (D)HSP.}
The quantum algorithmic landscape for DCP, and HSP more generally, becomes significantly more positive if we consider the relaxed setting where the algorithm is only given polynomially many coset samples but otherwise need not be efficient. Ettinger, H{\o}yer, and Knill~\cite{ettingerhoyerknill,ettingerhoyer} show that the hidden subgroup problem over any finite group has quantum query complexity polynomial in the logarithm of the group order, via coset sampling followed by an inefficient measurement. 
Moreover, DHSP can be solved given polynomially many Fourier samples~\cite{baconchildsvandam,randomfourier}.

\paragraph{Prior work on information-loss barriers for HSP.} 
While we know that polynomially many coset samples suffice to solve HSP in any group~\cite{ettingerhoyerknill}, it is also known that polynomially many Fourier samples need not suffice. 
For certain symmetric-group instances relevant to graph isomorphism, Moore, Russell, and Schulman~\cite{moorerussellschulman} show that polynomially many samples are insufficient if each coset state is measured separately.
This includes strong Fourier sampling in any representation basis, and even arbitrary single-state measurements followed by unrestricted classical postprocessing.
Hallgren, Moore, R{\"o}tteler, Russell, and Sen~\cite{hallgrencoset} further establish lower bounds on how many coset states must be measured jointly to solve the graph-isomorphism instances using polynomially many samples.
Thus polynomially many coset states suffice for every finite group, but polynomially many individual strong Fourier samples suffice for the dihedral instances and fail for these symmetric-group instances. 

These no-go results are incomparable to ours, as we restrict attention to the dihedral setting but obtain our no-go by imposing additional restrictions that rule out the existing algorithms that work with polynomially many Fourier samples~\cite{baconchildsvandam,randomfourier}.

\section{Preliminaries}\label{sec:prelim}

We use $\norm{\cdot}_1$ to denote the trace norm (or the Schatten 1-norm) of a Hermitian operator.
It is multiplicative: $\norm{X \otimes Y}_1 = \norm{X}_1 \cdot \norm{Y}_1$, and additionally for (not necessarily unit) vectors $\ket{\phi}, \ket{\psi}$ we have:
\begin{equation}\label{eq:rank1}
    \norm{\ketbra{\phi}{\psi}}_1 = \norm{\ket{\phi}}_2 \cdot \norm{\ket{\psi}}_2
\end{equation}
\noindent
A (mixed) quantum state $\rho$ is a Hermitian PSD operator with trace 1.
The trace distance between two quantum states $\rho, \sigma$ is defined as $\frac{1}{2} \norm{\rho - \sigma}_1$.

\begin{definition}[Distinguishing advantage]\label{def:distadv}
    Given a quantum algorithm $\mathcal{A}$ with constant output space $\mathcal{U}$ and two quantum states $\rho, \sigma$, the distinguishing advantage of $\mathcal{A}$ between $\rho$ and $\sigma$ is defined to be $$\max_{\Phi: \mathcal{U} \to \{0, 1\}} \left|\Pr[\Phi(\mathcal{A}(\rho)) = 1] - \Pr[\Phi(\mathcal{A}(\sigma)) = 1]\right|.$$
    Equivalently, this is the total-variation distance between the distributions of $A(\rho)$ and $A(\sigma)$.
\end{definition}

\begin{lemma}[Distinguishing advantage is bounded by trace distance]\label{lem:tracedist}
    For any (even unbounded) algorithm $\mathcal{A}$, the distinguishing advantage of $\mathcal{A}$ between $\rho$ and $\sigma$ is at most $\frac{1}{2} \norm{\rho - \sigma}_1$.
\end{lemma}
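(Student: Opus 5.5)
The plan is to reduce to the standard variational characterization of trace distance. Fix any algorithm $\mathcal{A}$, any post-processing $\Phi:\mathcal{U}\to\{0,1\}$, and the two states $\rho,\sigma$. By Definition~\ref{def:distadv}, the quantity to bound is $|\Pr[\Phi(\mathcal{A}(\rho))=1]-\Pr[\Phi(\mathcal{A}(\sigma))=1]|$. The first step is to model $\mathcal{A}$ (even unbounded) as a quantum channel $\mathcal{E}$ followed by a measurement. Appending an ancilla, applying a unitary, measuring in some basis, discarding registers, and applying classical post-processing all fit this form. Composing with $\Phi$, the whole procedure $\Phi\circ\mathcal{A}$ becomes a single two-outcome POVM $\{P,\,I-P\}$ with $0\preceq P\preceq I$, where $P=\mathcal{E}^\dagger(M_1)$ for the effect $M_1$ of the final measurement collecting outcomes with $\Phi=1$. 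The key point is that the adjoint of a channel is unital and completely positive, so $0\preceq P\preceq I$.

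The second step is the Born rule: $\Pr[\Phi(\mathcal{A}(\rho))=1]-\Pr[\Phi(\mathcal{A}(\sigma))=1]=\mathrm{tr}(P(\rho-\sigma))$. The operator $\Delta=\rho-\sigma$ is Hermitian and traceless. Write its spectral decomposition as $\Delta=\Delta_+-\Delta_-$ with $\Delta_\pm\succeq 0$ having orthogonal supports. Tracelessness gives $\mathrm{tr}\Delta_+=\mathrm{tr}\Delta_-=\frac12\norm{\Delta}_1$.

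The third step bounds the trace term using $0\preceq P\preceq I$:
\[
\mathrm{tr}(P\Delta)\le \mathrm{tr}(P\Delta_+)\le \mathrm{tr}\Delta_+=\tfrac12\norm{\rho-\sigma}_1,
\qquad
\mathrm{tr}(P\Delta)\ge -\mathrm{tr}(P\Delta_-)\ge -\tfrac12\norm{\rho-\sigma}_1 .
\]
Hence the absolute value is at most $\frac12\norm{\rho-\sigma}_1$. Taking the maximum over $\Phi$ gives the lemma.

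The only step needing care is the first, namely justifying that an arbitrary algorithm with classical output, including intermediate measurements and adaptivity, reduces to a single POVM. I would handle this by deferring measurements: every intermediate measurement is replaced by a coherent copy into a fresh register, so the algorithm becomes an isometry followed by one computational-basis measurement. Alternatively, I would use the Stinespring picture, in which any sequence of instruments composes into a single instrument whose coarse-graining by $\Phi$ is a two-outcome POVM. Everything after that is the routine linear algebra above.
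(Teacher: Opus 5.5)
The paper states this lemma as a standard fact (the Helstrom--Holevo bound) and gives no proof of its own. Your argument is the standard proof, and it is correct as written. You first collapse $\Phi\circ\mathcal{A}$ into a two-outcome POVM $\{P, I-P\}$ with $0\preceq P\preceq I$, using deferred measurement or Stinespring together with unitality and complete positivity of $\mathcal{E}^\dagger$. You then use the Jordan decomposition $\rho-\sigma=\Delta_+-\Delta_-$ to get $|\mathrm{tr}(P(\rho-\sigma))|\le \mathrm{tr}\,\Delta_+=\frac12\|\rho-\sigma\|_1$.

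A shorter route, closer to how the paper argues elsewhere (the data-processing steps in the proofs of Theorem~\ref{thm:main} and Corollary~\ref{cor:simon-does-not-work}), works as follows. Treat ``run $\mathcal{A}$ and record its output'' as a CPTP map into diagonal states on $\mathcal{U}$. Note that total-variation distance equals half the trace norm of the difference of those diagonal states. Then invoke contractivity of $\|\cdot\|_1$ under channels. Your linear-algebra step is exactly the proof of that contractivity for measurement channels. So your version is self-contained, while the data-processing version is a one-liner that cites the general monotonicity theorem.
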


\begin{lemma}[Norm of an average]
\label{lem:average-at-most-max-td}
Let $\mathcal{X}$ be a finite nonempty set, let $(p_x)_{x\in\mathcal{X}}$
be a probability distribution, and let $(A_x)_{x\in\mathcal{X}}$ be
finite-dimensional operators on the same space. Then
\begin{align*}
    \left\|\sum_{x\in\mathcal{X}}p_x A_x\right\|_1
    \leq \sum_{x\in\mathcal{X}}p_x\|A_x\|_1
    \leq \max_{x\in\mathcal{X}}\|A_x\|_1.
\end{align*}
In particular, for any two families of quantum states
$(\rho_x)_{x\in\mathcal{X}}$ and $(\sigma_x)_{x\in\mathcal{X}}$,
\begin{align*}
    \frac{1}{2}\left\|\sum_{x\in\mathcal{X}}p_x\rho_x
        -\sum_{x\in\mathcal{X}}p_x\sigma_x\right\|_1
    \leq \max_{x\in\mathcal{X}}
        \frac{1}{2}\|\rho_x-\sigma_x\|_1.
\end{align*}
\end{lemma}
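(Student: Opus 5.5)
The lemma follows from the fact that $\norm{\cdot}_1$ is a norm on the finite-dimensional space of operators. Only two properties are needed: absolute homogeneity, $\norm{cA}_1 = |c|\,\norm{A}_1$, and the triangle inequality, $\norm{A+B}_1 \le \norm{A}_1 + \norm{B}_1$. If one wants to derive the triangle inequality rather than quote it, I would use the variational characterization $\norm{A}_1 = \max_{U} |\mathrm{Tr}(UA)|$, with the maximum over unitaries $U$. Subadditivity is then immediate: the maximum of a sum is at most the sum of the maxima.

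For the first inequality, I will induct on $|\mathcal{X}|$ using the triangle inequality, or apply it to the finite sum directly. This gives $\norm{\sum_x p_x A_x}_1 \le \sum_x \norm{p_x A_x}_1$. Each term equals $p_x \norm{A_x}_1$, because $p_x \ge 0$ and homogeneity holds.

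For the second inequality, I bound each $\norm{A_x}_1$ by $M := \max_{x}\norm{A_x}_1$. This maximum exists because $\mathcal{X}$ is finite and nonempty. Then $\sum_x p_x \norm{A_x}_1 \le M \sum_x p_x = M$, since the $p_x$ are nonnegative and sum to $1$.

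For the ``in particular'' statement, I set $A_x := \rho_x - \sigma_x$. By linearity, $\sum_x p_x \rho_x - \sum_x p_x \sigma_x = \sum_x p_x A_x$. Applying the first part and multiplying through by $\tfrac12$ gives the claim.

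There is no real obstacle here. The only point requiring care is using $p_x \ge 0$ in the homogeneity step, and noting that the maximum is attained because $\mathcal{X}$ is finite.
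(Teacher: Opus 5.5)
Your proposal is correct and follows the paper's proof exactly: the triangle inequality plus absolute homogeneity (with $p_x \ge 0$) for the first inequality, $\sum_x p_x = 1$ for the second, and the specialization $A_x = \rho_x - \sigma_x$ for the final statement. The variational characterization you mention for subadditivity is an optional extra that the paper does not need, since it simply quotes the norm properties.
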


\begin{proof}
The first inequality follows from the triangle inequality and absolute
homogeneity of the Schatten $1$-norm. The second follows from
$\sum_x p_x=1$. Applying the operator inequality to
$A_x=\rho_x-\sigma_x$ proves the final statement.
\end{proof}

\begin{definition}[Dephasing channel]
    Given a quantum state $\rho$ on two registers $R_1, R_2$, the dephasing channel applied to $R_1$ is defined as measuring the $R_1$ register (but not discarding the result).
    More formally, the dephasing operator is the channel that maps:
    \begin{equation*}
        \rho \mapsto \sum_x
\left(\ketbra{x}_{R_1} \otimes I_{R_2}\right)\rho
\left(\ketbra{x}_{R_1} \otimes I_{R_2}\right).
    \end{equation*}
\end{definition}

\section{Proofs}
\label{sec:proof-main}

\subsection{Proving \Cref{thm:main}}

Writing $\matH(\vecy) = (H_1(y_1), \ldots, H_m(y_m))$, for $d \in \{0,1\}$, we can write the state given as input to the algorithm in \Cref{thm:main} as follows:
\begin{align*}
    \rho_d &= 
    \frac{1}{2^mN^m} \sum_{\yv\in\Z_N^m} \sum_{z \in \Z_{N/2}} \sum_{\substack{\bv,\bv'\in\{0,1\}^m\\ f_\yv^1(\bv)=f_\yv^1(\bv') = z}}
(-1)^{d(f_\yv^0(\bv)-f_\yv^0(\bv'))} \ketbra{\matH(\vecy), z} \otimes \ket{\vecb, f_{\vecy}^0 (\vecb)} \bra{\vecb', f_{\vecy}^0 (\vecb')}.
\end{align*}

Define state $\rho_{d, {i}}$ as the state obtained by dephasing the $b_i$ register of $\rho_d$,
\begin{align*}
    \rho_{d, \textcolor{red}{i}} &= \frac{1}{2^mN^m} \sum_{\yv\in\Z_N^m} \sum_{z \in \Z_{N/2}} \sum_{\substack{\bv,\bv'\in\{0,1\}^m\\ f_\yv^1(\bv)=f_\yv^1(\bv') = z \\ \textcolor{red}{b_i = b_i'}}}
(-1)^{d(f_\yv^0(\bv)-f_\yv^0(\bv'))} \ketbra{\matH(\vecy), z} \otimes \ket{\vecb, f_{\vecy}^0 (\vecb)} \bra{\vecb', f_{\vecy}^0 (\vecb')}.
\end{align*}
The effect of dephasing the $i$th register is highlighted in red.
\begin{claim}
    \label{clm:dephase-i}
    For every $i \in [m]$ and $d \in \{0, 1\}$, it holds that
\begin{align*}
    \| \rho_d - \rho_{d, i} \|_1 \le \sqrt{\frac{K}{N}}.
\end{align*}
\end{claim}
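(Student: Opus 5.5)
The plan is to isolate the part of $\rho_d$ that dephasing removes, namely the cross terms with $b_i\neq b_i'$, and bound its trace norm directly. By definition,
\[
\rho_d-\rho_{d,i}=X+X^\dagger ,
\]
where $X$ collects the terms with $b_i=0,\ b_i'=1$. By the triangle inequality it then suffices to show $\|X\|_1\le \tfrac12\sqrt{K/N}$.

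\textbf{Setting up $X$ in blocks.} Both $\matH(\vecy)$ and $z$ sit in a classical register, so $X$ is block diagonal in $(\matH(\vecy),z)$ and its trace norm is the sum of the blocks' trace norms. I first average over $\yv_{-i}$ (all coordinates except $i$) and use \Cref{lem:average-at-most-max-td}, which reduces the problem to a fixed $\yv_{-i}$. This costs nothing, because the blocks indexed by $H_j(y_j)$, $j\ne i$, are orthogonal, or else one can simply take the maximum.

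For fixed $\yv_{-i}$, a digest value $k\in[K]$ with fibre $S_k=H_i^{-1}(k)$ of size $s_k$, and a value $z$, I write $g(\bv_{-i})=\langle \yv_{-i},\bv_{-i}\rangle \bmod N$ and define two vectors.
\begin{itemize}
\item $\ket{A_z}$ is the signed superposition of $\ket{\bv_{-i},0,f^0}$ over those $\bv_{-i}$ with $g\equiv z \pmod{N/2}$. It does not depend on $y_i$.
\item $\ket{B_{z,y}}$ is the analogous superposition of $\ket{\bv_{-i},1,f^0}$ over those $\bv_{-i}$ with $g+y\equiv z \pmod{N/2}$, carrying the appropriate signs $(-1)^{d f^0}$ and high-bit labels.
\end{itemize}
With these, the $(k,z)$ block of $X$ is, up to the normalisation $\frac{1}{2^m N}$,
\[
\ket{A_z}\Bigl(\sum_{y\in S_k}\bra{B_{z,y}}\Bigr).
\]
This is rank one, so by \eqref{eq:rank1} its trace norm is $\|A_z\|_2\cdot\bigl\|\sum_{y\in S_k}B_{z,y}\bigr\|_2$.

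\textbf{Main obstacle: the norm of $\sum_{y\in S_k}B_{z,y}$.} I need this norm to be about $\sqrt{\sum_{y\in S_k}\|B_{z,y}\|^2}$, which would follow from near-orthogonality of the $B_{z,y}$. For distinct residues of $y$ mod $N/2$, the vectors are supported on disjoint sets of $\bv_{-i}$, so they are orthogonal. Two values $y$ and $y+N/2$ have the same support and differ only in the high-bit label $f^0$, and possibly in sign. I would handle this by pairing them and noting that the two vectors have orthogonal $h$-labels (they are orthogonal), or, failing that, by a factor-$\sqrt2$ slack that I expect is absorbed because $X$ and $X^\dagger$ were counted separately.

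\textbf{Finishing with two applications of Cauchy--Schwarz.} First, over $z$:
\[
\sum_z \|A_z\|_2\Bigl(\sum_{y\in S_k}\|B_{z,y}\|_2^2\Bigr)^{1/2}
\le \Bigl(\sum_z\|A_z\|_2^2\Bigr)^{1/2}\Bigl(\sum_{y\in S_k}\sum_z\|B_{z,y}\|_2^2\Bigr)^{1/2}
\le 2^{m-1}\sqrt{s_k},
\]
since each of $\sum_z\|A_z\|_2^2$ and $\sum_z\|B_{z,y}\|_2^2$ counts each $\bv_{-i}$ once. Multiplying by the normalisation gives $\|X_k\|_1\le \sqrt{s_k}/(2N)$. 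Second, over $k$, using $\sum_k s_k=N$:
\[
\sum_k \sqrt{s_k}\le\sqrt{KN},
\qquad\text{so}\qquad
\|X\|_1\le \tfrac12\sqrt{K/N}.
\]
Therefore $\|\rho_d-\rho_{d,i}\|_1\le 2\|X\|_1\le\sqrt{K/N}$.
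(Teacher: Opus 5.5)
Your proposal is correct and is essentially the paper's proof. Your $\ket{A_z}$ and $\sum_{y\in S_k}\ket{B_{z,y}}$ are exactly the paper's $\ket{\phi_0(\yv_{-i},z)}$ and $\ket{\phi_1(\yv_{-i},z,a)}$, and you use the same rank-one block decomposition, exact orthogonality of the per-$y$ contributions, and the same two Cauchy--Schwarz steps. One caveat, which does not affect correctness: your fallback of a ``$\sqrt2$ slack absorbed by counting $X$ and $X^\dagger$ separately'' would not work, since that factor of $2$ is already spent in reaching exactly $\sqrt{K/N}$. It is unnecessary, however, because $y$ and $y+N/2$ give flipped $h$-labels, so all $\ket{B_{z,y}}$ with distinct $y$ have disjoint supports; this is the paper's observation that $(\bv,h,z)$ determines $y_i$ when $b_i=1$.
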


\begin{proof}
Let $\vecy_{-i}$ denote the vector $\vecy$ with the $i$th entry set to $\bot$.
By \Cref{lem:average-at-most-max-td}, it suffices to bound $ \max_{\vecy_{-i}}  \| M(\vecy_{-i}) \|_1$, where $M(\yv_{-i})$ is defined as follows:
\begin{align*}
    M(\vecy_{-i}) &= \frac{1}{2^m N} \sum_{\substack{y_i \in \ZZ_N \\ z \in \ZZ_{N/2}}} \sum_{\substack{\vecb, \vecb' \in \{0,1\}^m\\ f_{\vecy}^1(\vecb) = f_{\vecy}^1(\vecb') = z \\ b_i \neq b_i'}} (-1)^{d (f_{\vecy}^0 (\vecb) - f_{\vecy}^0(\vecb')) } \ketbra{\matH(\vecy), z} \otimes \ketbra{\vecb, f^0_{\vecy}(\vecb)}{\vecb', f_{\vecy}^0 (\vecb')}.
\end{align*}
We will decompose $M (\vecy_{-i})$ into orthogonal blocks that are each just rank-$1$ operators, whose norms we can easily bound. These rank-$1$ operators are defined via the following vectors:
\begin{align*}
    \ket{\phi_{\red{0}}(\vecy_{-i}, z)} &= \sum_{\substack{\vecb \in \{0,1\}^m : \red{b_i = 0} \\ f_{\vecy}^1(\vecb) = z}} (-1)^{d f_{\vecy}^0(\vecb)} \ket{\vecb, f_{\vecy}^0(\vecb)}, \text{ and}\\
    \ket{\phi_{\red{1}} (\vecy_{-i}, z, a)} &= \sum_{y_i\in H_i^{-1}(a)}
\sum_{\substack{
    \bv\in\{0,1\}^m: \red{b_i=1}\\
    f_\yv^1(\bv)=z
}}
(-1)^{d f_\yv^0(\bv)}
\ket{\bv,f_\yv^0(\bv)}.
\end{align*}
The difference between the two vectors is highlighted in red.
Observe that since $\ket{\phi_0(\vecy_{-i}, z)}$ only sums over $\vecb$ where $b_i = 0$, the state does not depend on $y_i$.
To bound the $1$-norm of these rank-$1$ blocks, it will be useful to compute the $\ell_2$ norms of these vectors. 
\begin{align}
    \sum_{z \in \ZZ_{N/2}} \left\| \ket{\phi_0(\vecy_{-i}, z)} \right\|_2^2 &= \sum_{\substack{\vecb \in \{0,1\}^m : b_i = 0}} 1 = 2^{m-1}, \text{ and}\label{eq:phi0}\\
    \sum_{z \in \ZZ_{N/2}} \left\| \ket{\phi_1 (\vecy_{-i}, z, a)} \right\|_2^2 &= \sum_{y_i\in H_i^{-1}(a)}
    \sum_{\substack{
        \bv\in\{0,1\}^m: b_i=1}} 1 = |H_{i}^{-1}(a)| \cdot 2^{m-1},\label{eq:phi1}
\end{align}
where the second equality holds because the terms in $\ket{\phi_1 (\vecy_{-i}, z, a)}$ that arise from distinct values of $y_i$ have disjoint support: for contradiction, suppose that there exist $y \neq y'$ with an overlapping computational basis state. Let $\vecy$ be the vector formed by inserting $y$ in the $i$th location of $\vecy_{-i}$, and similarly let $\vecy'$ be the vector formed by inserting $y'$ in the $i$th location of $\vecy_{-i}$. Then, since there exists some overlapping computational basis state, there exists some $\vecb$ such that $b_i = 1$ and $f^0_{\vecy}(\vecb) = f^0_{\vecy'}(\vecb)$. Since $z = f^1_{\vecy}(\vecb) = f^1_{\vecy'}(\vecb)$ is also fixed, this means that $f_{\vecy}(\vecb) = f_{\vecy'}(\vecb)$, that is,
\begin{align*}
    y + \sum_{j \neq i} y_j b_j \equiv y' + \sum_{j \neq i} y_j b_j \pmod{N},
\end{align*}
which means that $y \equiv y' \pmod{N}$, a contradiction.

For convenience, let us use $\matH(\yv_{-i})$ to denote $\matH(\yv)$ after removing entry $i$; thus if $H_i(y_i) = a$ we can write $\matH(\yv) = (\matH(\yv_{-i}), a)$.
Then we can express $M(\vecy_{-i})$ in terms of the vectors $\ket{\phi_0(\vecy_{-i}, z)}$ and $\ket{\phi_1 (\vecy_{-i}, z, a)}$, to get
\begin{align*}
    \| M(\vecy_{-i}) \|_1 &= \frac{1}{2^m N} \left\| \sum_{a \in [K]} \sum_{z \in \ZZ_{N/2}} \ketbra{\matH(\vecy_{-i}), a, z}\otimes \Bigl( \ketbra{\phi_0(\vecy_{-i}, z)}{\phi_1 (\vecy_{-i}, z, a)} + \ketbra{\phi_1 (\vecy_{-i}, z, a)}{\phi_0(\vecy_{-i}, z)} \Bigr) \right\|_1\\
    &\le \frac{1}{2^m N} \sum_{a \in [K]} \sum_{z \in \ZZ_{N/2}} \norm{\ketbra{\matH(\vecy_{-i}), a, z}\otimes \Bigl( \ketbra{\phi_0(\vecy_{-i}, z)}{\phi_1 (\vecy_{-i}, z, a)} + \ketbra{\phi_1 (\vecy_{-i}, z, a)}{\phi_0(\vecy_{-i}, z)} \Bigr)}_1\\
    &= \frac{1}{2^m N} \sum_{a \in [K]} \sum_{z \in \ZZ_{N/2}} \norm{ \ketbra{\phi_0(\vecy_{-i}, z)}{\phi_1 (\vecy_{-i}, z, a)} + \ketbra{\phi_1 (\vecy_{-i}, z, a)}{\phi_0(\vecy_{-i}, z)}}_1 \\
    &\le \frac{1}{2^{m-1} N} \sum_{a \in [K]} \sum_{z \in \ZZ_{N/2}} \left\| \ket{\phi_0(\vecy_{-i}, z)} \right\|_2 \cdot \left\|\ket{\phi_1 (\vecy_{-i}, z, a)} \right\|_2 \qquad\text{ (plugging in~\eqref{eq:rank1})}\\
    &\le \frac{1}{2^{m-1} N} \sum_{a \in [K]} \left(\sqrt{\sum_{z \in \ZZ_{N/2}} \norm{\ket{\phi_0(\vecy_{-i}, z)}}_2^2}\sqrt{\sum_{z \in \ZZ_{N/2}} \norm{\ket{\phi_1 (\vecy_{-i}, z, a)}}_2^2}\right) \qquad\text{ (Cauchy-Schwarz)}\\
    &= \frac{1}{2^{m-1} N } \sum_{a \in [K]} 2^{m-1} \sqrt{|H^{-1}_i(a)|} \qquad\text{ (plugging in ~\eqref{eq:phi0} and~\eqref{eq:phi1})}\\
    &\le \frac{1}{N} \sqrt{K \cdot  \sum_{a \in [K]} |H^{-1}_i(a)|} \qquad\text{ (Cauchy-Schwarz)} \\
    &= \frac{1}{N} \sqrt{KN} \\
    &= \sqrt{\frac{K}{N}}.
\end{align*}
\end{proof}

\begin{proof}[Proof of Theorem~\ref{thm:main}]
We will finish with a hybrid argument that dephases one coordinate at a time. Let $\rho_{d, [i]}$ be the state obtained by applying the dephasing channel to the first $i$ qubits of $\vecb$. Therefore $\rho_{d, \emptyset} = \rho_d$. Applying the dephasing channel to the first $i-1$ qubit registers transforms $\rho_{d}$ to $\rho_{d, [i-1]}$ and $\rho_{d, i}$ to $\rho_{d, [i]}$. By the data processing inequality, the trace distance cannot increase by applying the dephasing channel, so
\begin{align*}
    \| \rho_{d, [i-1]} - \rho_{d, [i]}\|_1 \le \| \rho_{d} - \rho_{d, i}\|_1 \le \sqrt{K/N},
\end{align*}
where the last inequality holds because of \Cref{clm:dephase-i}.
Consequently, we have:
\begin{equation}
    \norm{\rho_d - \rho_{d, [m]}}_1 \leq \sum_{i = 1}^{m} \norm{\rho_{d, [i-1]} - \rho_{d, [i]}}_1 \leq m\sqrt{K/N}.
\end{equation}
Next, we claim that $\rho_{0, [m]} = \rho_{1, [m]}$.
Indeed, for either $d \in \{0,1\}$, after all $m$ bits of $\rho_d$ have been dephased, only the $\vecb = \vecb'$ terms remain, and since $(-1)^{d \cdot (f^0_{\vecy}(\vecb) - f_{\vecy}^0(\vecb))} = 1$ for all $\vecy$, the dependence on $d$ will be eliminated. Finally,
\begin{align*}
    \|\rho_0 - \rho_1\|_1 &\le \| \rho_0 - \rho_{0, [m]}\|_1 + \| \rho_1 - \rho_{1, [m]}\|_1\\
    &\le 2 m \sqrt{\frac{K}{N}}.
\end{align*}
Now applying Lemma~\ref{lem:tracedist} proves the theorem.
\end{proof}

\subsection{Proving \Cref{cor:simon-does-not-work}}

\begin{definition}[Digest functions]
    Write $r=\log n$ for the length of each Step-3 summary, and set $\ell=\left\lfloor 2n/3 \right\rfloor.$
For every coordinate, write
\begin{align*}
y_j=2^\ell q_j+u_j,
\qquad
0\le q_j<2^{n-\ell},
\qquad
0\le u_j<2^\ell,
\end{align*}
and choose the digest functions $H_j(y_j) = q_j+1$.
In words, the digest consists of the $n-\ell$ most-significant bits of $y_j$.
Thus the digest range has size $K = 2^{n-\ell}$.
\end{definition}

\par\medskip\noindent\emph{Reconstructing the Step-3 summaries.}
We first show that these digests suffice to reconstruct Simon's Step-3
summaries up to negligible error.  For block $i$, the exact summary is
\begin{align*}
\left\lfloor
\frac{2^r}{N}
\left(\left\langle\yv^{(i)},\bv^{(i)}\right\rangle\bmod N\right)
\right\rfloor,
\end{align*}
whereas the summary computed using only the digests is
\begin{align*}
\left\lfloor
\frac{2^r}{N}
\left(\left\langle2^\ell\vecq^{(i)},\bv^{(i)}\right\rangle\bmod N\right)
\right\rfloor.
\end{align*}

\begin{definition}
    Let $\mathsf{BAD} \subseteq \Z_N^m$ be the set of all vectors $\yv$ such that there exists a block $i \in [L]$ and a string $\bv^{(i)} \in \{0, 1\}^{c\log n}$ such that:
    \begin{equation}\label{eq:recongood}
        \left\lfloor
\frac{2^r}{N}
\left(\left\langle\yv^{(i)},\bv^{(i)}\right\rangle\bmod N\right)
\right\rfloor \neq \left\lfloor
\frac{2^r}{N}
\left(\left\langle2^\ell\vecq^{(i)},\bv^{(i)}\right\rangle\bmod N\right)
\right\rfloor.
    \end{equation}
\end{definition}

\begin{claim}\label{clm:ybad}
    We have:
    $$\Pr_{\yv \gets \Z_N^m} [\yv \in \mathsf{BAD}] \leq mn^{c+1} \cdot 2^{\ell-n}.$$
\end{claim}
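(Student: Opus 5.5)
We bound the probability by a union bound over all blocks $i \in [L]$ and all strings $\bv^{(i)} \in \{0,1\}^{c\log n}$. There are $L = m/(c\log n)$ blocks and $2^{c\log n} = n^c$ strings per block, so $L \cdot n^c \le m n^c$ pairs in total. For each fixed pair we will show that the probability over $\yv^{(i)}$ that \eqref{eq:recongood} holds is at most $(c\log n + 1)\,2^{\ell - n}$. Since $c\log n + 1 \le n$ for large $n$, the union bound then gives at most $m n^{c} \cdot n \cdot 2^{\ell-n} = m n^{c+1} 2^{\ell-n}$, as claimed.

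For a fixed block and string $\bv = \bv^{(i)}$, let $t = |\bv| \le c\log n$. Write $A = \langle 2^\ell \vecq^{(i)}, \bv\rangle \bmod N$ and $U = \sum_j u_j b_j$, where $0 \le U \le t(2^\ell - 1) < t\,2^\ell$. Then
\[
  \langle \yv^{(i)}, \bv\rangle \bmod N = (A + U) \bmod N .
\]
The top $r$ bits of $A$ and of $(A+U)\bmod N$ can differ only if adding $U$ to $A$ crosses a multiple of $N/2^r$. Wraparound past $N$ is the special case of crossing the multiple $N$, since $N/2^r$ divides $N$. Because $U < t\,2^\ell$, such a crossing forces $A$ to lie in the window $[kN/2^r - t2^\ell,\, kN/2^r)$ for some $k \in \{1,\dots,2^r\}$.

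Next we need the distribution of $A$. If $t \ge 1$, then $A = 2^\ell \cdot (\sum_j q_j b_j \bmod 2^{n-\ell})$. Here $\sum_j q_j b_j \bmod 2^{n-\ell}$ is uniform on $\Z_{2^{n-\ell}}$, because any single $q_j$ with $b_j = 1$ is uniform and independent of the others. So $A$ is uniform over the $2^{n-\ell}$ multiples of $2^\ell$ in $\Z_N$. Each window contains exactly $t$ multiples of $2^\ell$ (assuming $2^\ell \mid N/2^r$, i.e.\ $\ell \le n - r$, which holds for large $n$). Each window therefore has probability $t\,2^{\ell-n}$, and with $2^r = n$ windows the total is at most $n t\, 2^{\ell - n}$. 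If $t = 0$, both sides of \eqref{eq:recongood} are $0$ and there is no failure.

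The count is now slightly off from my initial target. The per-pair bound is $n t\,2^{\ell-n} \le n\, c\log n \cdot 2^{\ell-n}$, and the union over $L n^c$ pairs gives $L n^c \cdot n\, c\log n\, 2^{\ell-n} = m n^{c+1} 2^{\ell-n}$, using $L \cdot c\log n = m$. This is exactly the claimed bound, so the factor $c\log n$ cancels cleanly against $L$. The main point of care is therefore this counting step, namely getting $n t$ per pair and not $t$, together with the divisibility $2^\ell \mid N/2^r$ and the uniformity of $A$ over multiples of $2^\ell$. The rest is routine.
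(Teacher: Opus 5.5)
Your proof is correct and takes essentially the same route as the paper. Both arguments union-bound over the $L n^c$ pairs $(i,\bv^{(i)})$ and bound each pair's failure probability by $2^r\cdot c\log n\cdot 2^{\ell-n}$, using the fact that a discrepancy requires the low-order term to cross one of the $2^r$ summary boundaries. The only difference is which inner product you count bad values of: the paper uses the exact inner product, uniform on $\Z_N$, while you use the digest-based $A$, uniform on multiples of $2^\ell$; this is an inessential variation.

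Two cleanups. Delete your opening paragraph: its per-pair target $(c\log n+1)2^{\ell-n}$ omits the factor $2^r=n$ and is false. Also, you do not need the "large $n$" assumption $2^\ell \mid N/2^r$, since any $t2^\ell$ consecutive integers contain exactly $t$ multiples of $2^\ell$.
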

\begin{proof}
    Let us begin by fixing an $i \in [L]$ and string $\bv^{(i)} \in \{0, 1\}^{c\log n}$.
    For this fixed $i, \bv^{(i)}$, we will bound the probability that~\eqref{eq:recongood} is violated.
    If $\bv^{(i)} = 0$, the LHS and RHS are clearly both 0, so assume henceforth that this is not the case.
    At the end, we will take a union bound.
    Note that:
    \begin{align*}
    \yv^{(i)}=2^\ell\vecq^{(i)}+\vecu^{(i)},
    \end{align*}
    and hence
    \begin{align*}
    \left\langle\yv^{(i)},\bv^{(i)}\right\rangle
    ={}&\left\langle2^\ell\vecq^{(i)},\bv^{(i)}\right\rangle
    +\left\langle\vecu^{(i)},\bv^{(i)}\right\rangle\pmod N,\\
    0\le\left\langle\vecu^{(i)},\bv^{(i)}\right\rangle
    &\le\frac mL(2^\ell-1).
    \end{align*}
    The exact and reconstructed summaries can therefore differ only if adding
    $\langle\vecu^{(i)},\bv^{(i)}\rangle$ crosses one of the $2^r$ boundaries
    between consecutive summary values.
    So the number of bad values of $\langle \yv^{(i)}, \bv^{(i)} \rangle \bmod{N}$ is at most:
    $$2^r \cdot \frac{m}{L}(2^\ell - 1) < \frac{m}{L} \cdot 2^{r+\ell}.$$
    Since $\bv^{(i)}$ is nonzero, $\langle \yv^{(i)}, \bv^{(i)} \rangle \bmod{N}$ is uniformly distributed $\bmod N$ so the probability that~\eqref{eq:recongood} is violated is $\leq \frac{m}{L} \cdot 2^{r+\ell-n}$.
    Finally, taking a union bound over the $L$ values of $i$ and the $2^{c\log n}$ values of $\bv^{(i)}$ for each $i$ implies the conclusion.
\end{proof}

\begin{claim}\label{clm:simonrecon}
    Let $\rho_d^{(3)}$ be Simon's exact Step-3 state after discarding the Fourier labels, and let $\widehat{\rho_d^{(3, \ell)}}$ be obtained by computing the summaries from the digests $q_1, \ldots, q_m$ and then discarding those digests.
    Then we have:
    $$\frac{1}{2} \norm{\rho_d^{(3)} - \widehat{\rho_d^{(3, \ell)}}}_1 \leq mn^{c+1} \cdot 2^{\ell-n}.$$
\end{claim}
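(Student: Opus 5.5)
Both $\rho_d^{(3)}$ and $\widehat{\rho_d^{(3, \ell)}}$ are averages over $\yv \gets \Z_N^m$ and over the measured $z$. For each fixed $\yv$, write $\rho_{d,\yv}^{(3)}$ for the classical-quantum state on $(z, \text{quantum register})$ that Steps 1--3 produce, and $\widehat{\rho_{d,\yv}}$ for the state obtained by running Steps 1--2 identically and then computing the summaries from $2^\ell\vecq^{(i)}$ instead of $\yv^{(i)}$. Then $\rho_d^{(3)}=\frac{1}{N^m}\sum_\yv \rho^{(3)}_{d,\yv}$ and $\widehat{\rho_d^{(3,\ell)}}=\frac{1}{N^m}\sum_\yv \widehat{\rho_{d,\yv}}$. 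Discarding the digests is the same as averaging them out, so it fits this form. Steps 1 and 2 do not depend on how the summaries are computed, so for fixed $\yv$ the two processes share the same distribution of $z$ and the same post-measurement state $|\psi^2_{\yv,z,d}\rangle$. They differ only in the classically controlled unitary (or isometry) of Step 3 that writes $g_{\yv^{(i)}}(\bv^{(i)})$ versus $\hat g(\bv^{(i)})$ into fresh registers.

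The key step is this. If $\yv\notin\mathsf{BAD}$, then by definition $g_{\yv^{(i)}}(\bv^{(i)})=\hat g(\bv^{(i)})$ for every block $i$ and every $\bv^{(i)}\in\{0,1\}^{c\log n}$. The two Step-3 isometries therefore act identically on every computational basis state $|\bv,h\rangle$, hence are equal as operators, and so $\rho^{(3)}_{d,\yv}=\widehat{\rho_{d,\yv}}$ exactly. This is the point where the quantifier in $\mathsf{BAD}$ over \emph{all} strings $\bv^{(i)}$ matters: the state is a superposition over many $\bv$, so we need agreement on every branch, not merely on a typical one. For $\yv\in\mathsf{BAD}$, we use only that both are quantum states, so their trace distance is at most $1$.

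Next, split the sum over $\yv$ into good and bad vectors. By the triangle inequality (the first inequality of Lemma~\ref{lem:average-at-most-max-td}),
\[
\tfrac12\bigl\|\rho_d^{(3)}-\widehat{\rho_d^{(3,\ell)}}\bigr\|_1\le \frac{1}{N^m}\sum_{\yv}\tfrac12\bigl\|\rho^{(3)}_{d,\yv}-\widehat{\rho_{d,\yv}}\bigr\|_1\le \Pr_{\yv}[\yv\in\mathsf{BAD}],
\]
and Claim~\ref{clm:ybad} then gives the stated bound $mn^{c+1}2^{\ell-n}$.

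The only delicate point is the bookkeeping in the first paragraph. The $\yv$-register may be retained in a purification or measured late, as in Remark~\ref{remark:superficial}, so we should first commute the $\yv$ measurement to the start, which is justified because everything is classically controlled on $\yv$. After that, $\yv$ is a classical mixture variable, and conditioning on it is legitimate. Once this is done, the argument is a direct coupling.
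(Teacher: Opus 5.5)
Your proposal is correct and follows essentially the same argument as the paper: write both states as mixtures over $\yv$, note that they coincide exactly when $\yv\notin\mathsf{BAD}$ (the two Step-3 isometries agree on every basis state) and differ by trace distance at most $1$ otherwise, then conclude with the first inequality of Lemma~\ref{lem:average-at-most-max-td} and Claim~\ref{clm:ybad}. The only cosmetic difference is that you condition on each individual $\yv$, while the paper splits into the two conditional mixtures on $\yv\in\mathsf{BAD}$ and $\yv\notin\mathsf{BAD}$; both give the same bound.
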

\begin{proof}
    If $\Pr[\yv \in \mathsf{BAD}]$ is 0 then these states are identical, so assume from now that this is not the case.
    A distinguisher between $\rho_d^{(3)}$ and $\widehat{\rho_d^{(3, \ell)}}$ can only succeed when $\yv \in \mathsf{BAD}$.
    A little more formally, we can apply the first inequality in Lemma~\ref{lem:average-at-most-max-td} after splitting each state into a probabilistic mixture of the conditional mixed state when $\yv \in \mathsf{BAD}$ and the conditional mixed state when $\yv \notin \mathsf{BAD}$.
    In the latter case, the conditional mixed states are identical.
    In the former case, the conditional mixed states have trace distance at most 1.
    So the overall trace distance is at most the probability that $\yv \in \mathsf{BAD}$, which by Claim~\ref{clm:ybad} is at most the stated bound.
\end{proof}

\begin{proof}[Proof of \Cref{cor:simon-does-not-work}]
Consider the algorithm that starts with
\begin{align*}
(H_1(y_1),\ldots,H_m(y_m)),z
\qquad\text{and}\qquad
\ket{\psi^2_{\yv,z,d}},
\end{align*}
computes the reconstructed Step-3 summaries, and then performs Steps 4--7 of
Simon's algorithm.  This algorithm uses exactly the information allowed in
\Cref{thm:main}.  Since $K=2^{n-\ell}$, that theorem bounds its advantage
by
\begin{align*}
\poly(n)\sqrt{\frac KN}
&=\poly(n)
2^{-\ell/2}\\
&\le\poly(n)2^{-n/3+1}\\
&=\poly(n)2^{-n/3}.
\end{align*}
\noindent
Finally, a quantum channel cannot increase Schatten $1$-norm distance.  Thus
replacing the reconstructed Step-3 state by Simon's exact Step-3 state
changes the final guessing advantage by at most
\begin{align*}
O\left(\frac12\max_{d\in\{0,1\}}
\left\|\rho_d^{(3)}-
\widehat\rho_d^{(3,\ell)}\right\|_1\right)
\le\poly(n)2^{\ell-n}\leq \poly(n)2^{-n/3},
\end{align*}
by Claim~\ref{clm:simonrecon}.
Adding the reconstruction error to the bound from \Cref{thm:main} proves
that Simon's algorithm has advantage at most
$\poly(n)2^{-n/3}$.
\end{proof}

\bibliographystyle{alpha}
\bibliography{refs}

\end{document}

%% file: macros.tex
\usepackage{fullpage}
\usepackage{graphicx}
\usepackage{amsmath,amsfonts,amssymb}
\usepackage{amsthm}
\usepackage{lmodern}
\usepackage{txfonts}
\usepackage[pdfstartview=FitH,colorlinks,linkcolor=blue,citecolor=blue]{hyperref}

\theoremstyle{plain} 

\newcommand{\Z}{{\mathbb{Z}}}

\newcommand{\bv}{{\mathbf{b}}}

\newcommand{\yv}{{\mathbf{y}}}

\newcommand{\ket}[1]{{\left|#1\right\rangle}}

\newcommand{\poly}{{\sf poly}}

\usepackage{fullpage}
\usepackage[normalem]{ulem}
\usepackage{libertine}

\usepackage{amsmath}
\usepackage{amssymb}
\usepackage{amsthm}
\usepackage{mathtools}
\usepackage{thmtools}
\usepackage{bbm}
\usepackage[T1]{fontenc}

\usepackage{graphicx}
\usepackage{float}
\usepackage[dvipsnames]{xcolor}
\usepackage{tikz}
\usepackage{subcaption}
\usepackage{multirow}
\usepackage{enumitem}

\usepackage{algpascal}
\usepackage[vlined, ruled]{algorithm2e}

\usetikzlibrary{quantikz2}
\usepackage{physics}

\usepackage{cleveref}

\newcommand{\ZZ}{\mathbb{Z}}

\newcommand{\matH}{\mathbf{H}}

\newcommand{\vecb}{\mathbf{b}}

\newcommand{\vecg}{\mathbf{g}}

\newcommand{\vecq}{\mathbf{q}}

\newcommand{\vecu}{\mathbf{u}}

\newcommand{\vecy}{\mathbf{y}}

\renewcommand{\epsilon}{\varepsilon}
\renewcommand{\phi}{\varphi}

\ifnum\draft=1
\newcommand{\anote}[1]{\textcolor{Purple}{\textit{\textbf{Aparna}: #1}}}

\else
\newcommand{\anote}[1]{}
\fi

\newtheorem{theorem}{Theorem}[section]
\newtheorem{lemma}[theorem]{Lemma}

\newtheorem{definition}[theorem]{Definition}
\newtheorem{claim}[theorem]{Claim}
\newtheorem{corollary}[theorem]{Corollary}
\newtheorem{remark}[theorem]{Remark}

\allowdisplaybreaks

\newcommand{\red}[1]{\textcolor{red}{#1}}